\newtheorem{theorem}{Theorem}[section]
\newtheorem{lemma}[theorem]{Lemma}
\newtheorem{corollary}[theorem]{Corollary}
\theoremstyle{definition}
\newtheorem{definition}[theorem]{Definition}
\newtheorem{construction}[theorem]{Construction}
\newtheorem*{notation}{Notation}
\theoremstyle{remark}
\newtheorem{remark}[theorem]{Remark}
\DeclareMathOperator{\Dim}{Dim}
\DeclareMathOperator{\strong}{{str}}
\DeclareMathOperator{\mh}{MH}
\newcommand{\N}{\mathbb{N}}
\newcommand{\Z}{\mathbb{Z}}
\newcommand{\Q}{\mathbb{Q}}
\newcommand{\dimFS}[1]{\dim_{\textup{FS}}^{#1}}
\newcommand{\DimFS}[1]{\Dim_{\textup{FS}}^{#1}}
\newcommand{\rhoFS}[1]{\rho_{\textup{FS}}^{#1}}
\newcommand{\RFS}[1]{R_{\textup{FS}}^{#1}}
\title{Multihead Finite-State Compression}
\author{Neil Lutz\\Swarthmore College}
\date{}
\begin{document}

\maketitle

\begin{abstract}
    This paper develops multihead finite-state compression, a generalization of finite-state compression, complementary to the multihead finite-state dimensions of Huang, Li, Lutz, and Lutz (2025). In this model, an infinite sequence of symbols is compressed by a compressor that produces outputs according to finite-state rules, based on the symbols read by a constant number of finite-state read heads moving forward obliviously through the sequence. The main theorem of this work establishes that for every sequence and every positive integer $h$, the infimum of the compression ratios achieved by $h$-head finite-state information-lossless compressors equals the $h$-head finite-state predimension of the sequence. As an immediate corollary, the infimum of these ratios over all $h$ is the multihead finite-state dimension of the sequence.
\end{abstract}

\section{Introduction}

    We introduce multihead finite-state compression, a generalization of finite-state compression, and use it to prove an exact characterization of multihead finite-state dimension.

    Dai, Lathrop, Lutz, and Mayordomo~\cite{FSD} introduced finite-state dimension using finite-state gamblers that implement betting strategies called \emph{gales}. As Lutz had shown that Hausdorff (fractal) dimension~\cite{Haus19} can be characterized using unrestricted gales~\cite{DCC,DISS}, restricting to gales that are implementable by finite-state gamblers is one way to \emph{effectivize} Hausdorff dimension. This particular effectivization has proven broadly useful, particularly for studying Borel normality (e.g.,~\cite{BecHei2013,BoHiVi2005}), and has motivated sustained subsequent research~\cite{EFDB}. The main theorem of~\cite{FSD}, which our main theorem directly generalizes, proves an exact correspondence between finite-state dimension and the infimum of compression ratios achievable by information-lossless finite-state compressors, as defined by Huffman~\cite{Huff59a}.

    In the long tradition of work on multihead finite-state automata (surveyed in~\cite{HKM09}), Huang, Li, Lutz, and Lutz~\cite{MFSD} recently defined multihead finite-state gamblers, which generalize the finite-state gamblers of~\cite{FSD}. For each $h\in\Z^+$, an $h$-head finite-state gambler updates its state based on the current symbol, read by its \emph{leading head}, and on the $h-1$ earlier symbols in the sequence, read by its $h-1$ \emph{trailing heads}. The trailing heads move according to oblivious (i.e., data-independent) rules. For each $h\in\Z^+$,~\cite{MFSD} used these $h$-head finite-state gamblers to define $h$-head finite-state predimension $\dimFS{(h)}(S)$ of a sequence $S$ and defined the multihead finite-state dimension $\dimFS{\mh}(S)$ to be the infimum, over all $h\in\Z^+$, of $\dimFS{(h)}(S)$; we restate these definitions in more detail in Section~\ref{sec:MFSD}. The main result of~\cite{MFSD} is a hierarchy theorem proving that the predictive power of multihead finite-state gamblers increases strictly with the number of heads, analogous to the classic theorem of Yao and Rivest~\cite{YR78} showing that additional heads allow multihead automata to recognize more languages.

    In Section~\ref{sec:MFSC}, we define the complementary notion of multihead finite-state compressors, reusing the head movement infrastructure of~\cite{MFSD} but replacing betting functions with output functions. Whereas a betting function maps each state to a probability distribution on the alphabet to predict the next symbol, an output function maps each (state, symbol) pair to a binary string. For a compressor $C$ and an infinite sequence $S$ over a finite alphabet $\Sigma$, let $C(S[0:n])$ denote the concatenation of $C$'s outputs on the first $n$ symbols of $S$. We consider the asymptotic behavior of the compression ratio $|C(S[0:n])|/n$ as $n$ approaches infinity, defining $\rhoFS{(h)}(S)$ as the infimum, over all $h$-head finite-state compressors $C$, of the limit inferior of this compression ratio.

    Our main theorem states that this quantity is exactly the $h$-head finite-state predimension defined in~\cite{MFSD}:
    \[\dimFS{(h)}(S)=\rhoFS{(h)}(S).\]
    The main theorem of~\cite{FSD} is the $h=1$ case. Defining $\rhoFS{\mh}(S)=\inf_{h\in\Z^+}\rhoFS{(h)}(S)$, it is also immediate from our main theorem that
    \[\dimFS{\mh}(S)=\rhoFS{\mh}(S)\]
    and, by the hierarchy theorem of~\cite{MFSD}, that ``$h+1$ heads are better than $h$'' holds in this compression setting, in the sense that for all $h\in\Z^+$ there is a sequence $S$ satisfying
    \[\rhoFS{(h)}(S)>\rhoFS{(h+1)}(S).\]
    We also define a version of multihead finite-state compression that is \emph{strong} in the sense of Athreya, Hitchcock, Lutz, and Mayordomo~\cite{AHLM} (i.e., approaches the target compression ratio cofinitely often), and we show an exact correspondence to the multihead finite-state strong predimensions defined in~\cite{MFSD}, thereby generalizing the equivalence $\DimFS{}(S)=\RFS{}(S)$ proved in~\cite{AHLM}.

    To prove our main theorem, we define two explicit constructions: an $h$-head finite-state gambler based on an underlying $h$-head finite-state compressor in Section~\ref{sec:c2g}, and vice versa in Section~\ref{sec:g2c}. At a high level, our constructions follow the framework used in~\cite{FSD} for finite-state dimension and compressibility; working in blocks whose size $k$ parameterizes the quality of the approximation, the gambler bets more on a symbol if it is the beginning of many highly compressible suffixes, and the compressor uses the Shannon--Fano--Elias code of a distribution determined by the gambler's simulated bets on suffixes of length $k$. The key difficulty in adapting this framework to the multihead setting is that the next block cannot be simulated by the leading head without knowing which symbols the trailing heads will read during that block. Accordingly, each of our constructions involves maneuvering the trailing heads of the constructed automaton to eventually be $k$ time steps ahead of the trailing heads of the original automaton, while preserving the obliviousness of their movements.

    Our model is related to the deterministic $k$-transducers of Becher, Carton, and Heiber~\cite{BeCaHe2018} used to define and study a finite-state notion of independence~\cite{AlBeCa2019,BeCaHe2018}, and to the multi-head one-way deterministic finite-state transducers (\textsc{mh}-1DFTs) of Raszyk, Basin, and Traytel~\cite{RaBaTr2019}, generalizations of which have been used in runtime verification~\cite{RBKT2019,RaBaTr2020}. In contrast to both of these prior models, our $h$-FSCs apply their output function exactly when the leading head advances and move their heads obliviously. Unlike $k$-transducers, $h$-FSCs do not have multiple tapes. Unlike \textsc{mh}-1DFTs, our $h$-FSCs operate on infinite sequences, without an end-of-string marker, and are used, in this paper, entirely in the context of information-lossless compression.

\section{Multihead Finite-State Compressors}\label{sec:MFSC}

    \begin{definition}\label{def:hfsc}
        For $h\in\Z^+$, an \emph{$h$-head finite-state compressor ($h$-FSC)} is a 6-tuple
        \[C=(T \times Q, \Sigma,\delta, \mu,\nu,(t_0, q_0)),\]
        whose components are defined as follows.
        \begin{itemize}
            \item $T\times Q$ is a non-empty finite \emph{state space}. We refer to the elements of $T$ and $Q$ as \emph{$T$-states} and \emph{$Q$-states}, respectively; the two components are separated to enforce the obliviousness of the trailing heads' movements.
            \item $\Sigma$ is a finite \emph{alphabet} with $|\Sigma|\geq 2$.
            \item $\delta:T\times Q\times\Sigma^h\to T\times Q$ is a \emph{transition function} such that there exist functions $\delta_T:T\to T$ and $\delta_Q:Q\times\Sigma^h\to Q$ with
            \[\delta(t,q,\vec\sigma)=(\delta_T(t),\delta_Q(q,\vec\sigma))\]
            for all $t\in T$, $q\in Q$, and $\vec\sigma\in\Sigma^h$.
            \item $\mu:T\to\{0,1\}^{h-1}$ is a \emph{movement function}.
            \item $\nu:Q\times\Sigma\to\{0,1\}^*$ is an \emph{output function}.
            \item $(t_0,q_0)\in T\times Q$ is an \emph{initial state}.
        \end{itemize}
    \end{definition}
    As in the $h$-head finite-state gamblers of~\cite{MFSD}, at each time step $n$ there is a \emph{position vector} $\pi(n)=(\pi_1(n),\ldots,\pi_{h-1}(n))\in\N^{h-1}$ defined by $\pi(0)=(0,\ldots,0)$ and, for all $n\in\N$,
    \[\pi(n+1)=\pi(n)+\mu(\delta^n_T(t_0)),\]
    where $\delta_T^n$ is $\delta_T$ applied $n$ times. We interpret $(\pi_1(n),\ldots,\pi_{h-1}(n))$ as the positions of $h-1$ \emph{trailing heads} when the \emph{leading head} is at position $n$. Given a sequence $S\in\Sigma^\omega$, the vector of symbols at these positions (including the leading head's) is
    \[\vec\sigma_n=(S[\pi_1(n)],\ldots,S[\pi_{h-1}(n)],S[n]),\]
    and the compressor's state sequence $(t_0,q_0),(t_1,q_1),(t_2,q_2),\ldots$ evolves according to
    \begin{equation}\label{eq:trajectory}
        (t_{n+1},q_{n+1})=(\delta_T(t_n),\delta_Q(q_n,\vec\sigma_n))
    \end{equation}
    for all $n\in\N$.

    As observed in~\cite{MFSD}, these movement rules imply that for each trailing head $i$ there is a constant \emph{speed} $\alpha_i\in[0,1]$ such that
    \begin{equation}\label{eq:speed}
        \alpha_i n-|T|\leq \pi_i(n)\leq \alpha_i n+|T|
    \end{equation}
    holds for all $n\in\N$.

    \begin{definition}
        The \emph{output} of an $h$-head finite-state compressor
        \[C=(T \times Q, \Sigma,\delta, \mu,\nu,(t_0, q_0))\]
        on input $w\in\Sigma^*$ is given by
        \[C(w)=\nu(q_0,w[0])\nu(q_1,w[1])\ldots\nu(q_{|w|-1},w[|w|-1])\in\{0,1\}^*.\]
        The $h$-FSC $C$ is \emph{information-lossless} if the function from $\Sigma^*$ to $\{0,1\}^*\times T\times Q$ that maps each string $w$ to $(C(w),t_{|w|},q_{|w|})$ is injective.
    \end{definition}

    \begin{notation}
        For the remainder of the paper, we assume a fixed, finite, non-unary alphabet $\Sigma$. We use Python-style list slice notation: given a sequence $S[0]S[1]S[2]\ldots$, for all $a,b\in\N$ we let $S[a:b]$ denote the string $S[a]S[a+1]\ldots S[b-1]$. When $b\leq a$, $S[a:b]=\lambda$, the empty string. Slices of finite strings and vectors are handled analogously. All logarithms are base-2.
    \end{notation}

    \begin{definition}\label{def:compressionratios}
        For each $h\in\Z^+$, let $\mathcal{C}^{(h)}$ be the collection of all information-lossless $h$-head finite-state compressors.
        \begin{itemize}
            \item For each $h\in\Z^+$ and $S\in\Sigma^\omega$, we define
            \[\rhoFS{(h)}(S)=\inf_{C\in\mathcal{C}^{(h)}}\liminf_{n\to\infty}\frac{|C(S[0:n])|}{n\log|\Sigma|}\]
            and
            \[\RFS{(h)}(S)=\inf_{C\in\mathcal{C}^{(h)}}\limsup_{n\to\infty}\frac{|C(S[0:n])|}{n\log|\Sigma|}.\]
            \item For each $S\in\Sigma^\omega$, we define
            \[\rhoFS{\mh}(S)=\inf_{h\in\Z^+}\rhoFS{(h)}(S)\]
            and
            \[\RFS{\mh}(S)=\inf_{h\in\Z^+}\RFS{(h)}(S).\]
        \end{itemize}
    \end{definition}
    Note that since the finite-state compressors defined in~\cite{FSD} are exactly our 1-head finite-state compressors, we have $\rhoFS{(1)}(S)=\rhoFS{}(S)$ as defined in~\cite{FSD} and $\RFS{(1)}(S)=\RFS{}(S)$ as defined in~\cite{AHLM}.

    \section{Multihead Finite-State Gamblers and Dimensions}\label{sec:MFSD}

    We now recall the definitions of multihead finite-state gamblers, predimensions, and dimensions introduced by Huang, Li, Lutz, and Lutz~\cite{MFSD}. The term predimension was used in~\cite{MFSD} because when applied to sets, the authors showed these quantities lack an essential property of fractal dimension notions: stability under unions. They also showed that multihead finite-state dimension, by contrast, is stable under unions.

    \begin{definition}
        For $h \in \Z^+$, an \emph{$h$-head finite-state gambler ($h$-FSG)} is a 7-tuple
        \[G = (T \times Q, \Sigma, \delta, \mu, \beta, (t_0, q_0), c_0),\]
        whose components are defined as follows.
        \begin{itemize}
            \item $T$, $Q$, $\Sigma$, $\delta$, $\mu$, and $(t_0,q_0)$ are as in Definition~\ref{def:hfsc}.
            \item $\beta:Q\to\Delta_\Q(\Sigma)$ is a \emph{betting function}; $\Delta_\Q(\Sigma)$ is the collection of all rational-valued discrete probability distributions on $\Sigma$.
            \item $c_0$ is the gambler's \emph{initial capital}.
        \end{itemize}
    \end{definition}  
    The head movement and state transition rules are the same as those of $h$-head finite-state compressors, with the gambler's state sequence
    \[(t_0,q_0),(t_1,q_1),(t_2,q_2),\ldots\]
    on an input sequence $S\in\Sigma^\omega$ also evolving according to~\eqref{eq:trajectory}.

    \begin{definition}
        For $h\in\Z^+$ and
        \[G=(T\times Q,\Sigma,\delta,\mu,\beta,(t_0,q_0),c_0)\]
        an $h$-FSG, the \emph{martingale of $G$} is defined recursively by $d_G(\lambda)=c_0$ (where $\lambda$ denotes the empty string) and, for all $w\in\Sigma^*$ and $a\in\Sigma$
        \[d_G(wa)=|\Sigma|d_G(w)\beta(q_{|w|})(a),\]
        where $q_{|w|}$ is defined according to~\eqref{eq:trajectory}.
        For $s\in(0,\infty)$, the \emph{$s$-gale of $G$} is defined by
        \[d^{(s)}_G(w)=|\Sigma|^{(s-1)|w|}d_G(w)\]
        for all $w\in\Sigma^*$.
    \end{definition}

    \begin{definition}
        For $s\in(0,\infty)$, an \emph{$s$-gale} over an alphabet $\Sigma$ is a function $d:\Sigma^*\to[0,\infty)$ satisfying, for all $w\in\Sigma^*$,
        \[d(w)=|\Sigma|^{-s}\sum_{a\in\Sigma} d(wa).\]
        A \emph{martingale} is a 1-gale. The \emph{success set} of an $s$-gale $d$ is
        \[\mathcal{S}^\infty[d]=\left\{S\in\Sigma^\omega\;\middle|\; \limsup_{n\to\infty}d(S[0:n])=\infty\right\},\]
        and the \emph{strong success set} of $d$ is
        \[\mathcal{S}_{\strong}^\infty[d]=\left\{S\in\Sigma^\omega\;\middle|\; \liminf_{n\to\infty}d(S[0:n])=\infty\right\}.\]
    \end{definition}

    \begin{definition}\label{def:mfsd}
        For $h\in\Z^+$ and $S\in\Sigma^\omega$, the \emph{$h$-head finite-state predimension} of $S$ is
        \[\dimFS{(h)}(S)=\inf\left\{s\;\middle|\; \exists\ h\text{-FSG }G\text{ such that }S\in\mathcal{S}^\infty\big[d^{(s)}_G\big]\right\}\]
        and the \emph{$h$-head finite-state strong predimension} of $S$ is
        \[\DimFS{(h)}(S)=\inf\left\{s\;\middle|\; \exists\ h\text{-FSG }G\text{ such that }S\in\mathcal{S}_{\strong}^\infty\big[d^{(s)}_G\big]\right\}.\]
        For $S\in\Sigma^\omega$, the \emph{multihead finite-state dimension} of $S$ is
        \[\dimFS{\mh}(S)=\inf_{h\in\Z^+}\dimFS{(h)}(S),\]
        and the \emph{multihead finite-state strong dimension} of $S$ is
        \[\DimFS{\mh}(S)=\inf_{h\in\Z^+}\DimFS{(h)}(S).\]
    \end{definition}

\section{From Compressor to Gambler}\label{sec:c2g}

    \begin{construction}\label{const:gambler}
        Given any information-lossless $h$-FSC
        \[C=(T \times Q, \Sigma,\delta, \mu,\nu,(t_0, q_0)),\]
        and $k\in\Z^+$, we define the $h$-FSG
        \[G_k=(T'\times Q',\Sigma,\delta',\mu',\beta,(t_0',q_0'),1)\]
        as follows.

        \paragraph{Head movements.}
        Assume that $\alpha_1,\ldots,\alpha_{h-1}\in[0,1)$, as any trailing head $i$ with $\alpha_i=1$ can be simulated directly by the leading head. Hence, we define
        \[n_0=\max_{i<h}\min\{n\in\N:n\geq\pi_i(n+k)\},\]
        Informally, $n_0$ is the number of time steps it takes for the leading head to get at least $k$ time steps ahead of all trailing heads. Note that because trailing heads do not move in every time step, staying $k$ time steps ahead is not the same as staying $k$ positions ahead.
        \begin{remark}\label{rem:n0}
            If $\alpha<1$ is the maximum speed of a trailing head in $C$, then for all
            \[n\geq \frac{\alpha k+|T|}{1-\alpha},\]
            for each trailing head $i$, we have
            \begin{align*}
                \pi_i(n+k)&\leq \alpha\cdot(n+k)+|T|\\
                &\leq n,
            \end{align*}
            by~\eqref{eq:speed}. Hence, holding the compressor $C$ constant, we have $n_0=O(k)$.
        \end{remark}

        Let $T'=T\times\{0,\ldots,n_0\}$, and define $\delta'_{T'}:T'\to T'$ by
        \[
            \delta'_{T'}(t,n)=
            \begin{cases}
                (\delta_T(t),n+1)&\text{if }n<n_0\\
                (\delta_T(t),n_0)&\text{if }n=n_0
            \end{cases}
        \]
        and $\mu':T'\to\{0,1\}^{h-1}$ by $\mu'(t,n)=(\mu'_1(t,n),\ldots,\mu'_{h-1}(t,n))$, where
        \[\mu'_i(t,n)=
            \begin{cases}
                1&\text{if }n<n_0\text{ and }n<\pi_i(n+k)\\
                \mu_i(\delta_T^k(t))&\text{otherwise}.
            \end{cases}
        \]
        For each $n\in\N$, let $\pi'(n)=(\pi'_1(n),\ldots,\pi'_{h-1}(n))$ be the position vector induced by $\mu'$, observing that $\pi'_i(n)=\pi_i(n+k)$ holds for all $n\geq n_0$. That is, each trailing head in $G_k$ will eventually stay exactly $k$ time steps ahead of the corresponding trailing head in $C$. Define $t_0'=(t_0,0)$.

    \paragraph{$Q'$-state transitions.}
        Let
        \[\ell=k\left\lceil\frac{n_0+k}{k}\right\rceil,\]
        noting that $\ell=O(k)$ by Remark~\ref{rem:n0}, and let
        \[Q'=Q\times(\Sigma^{h-1})^k\times \{0,\ldots,k-1\}\times\Sigma^{\leq \ell}.\]
        Our goal is for each entry $\vec{u}\in(\Sigma^{h-1})^k$ to eventually be a record of the vectors of symbols observed by the $h-1$ trailing heads of the gambler $G_k$ in each of the $k$ previous time steps, which, after passing the threshold $\ell$, will equal the vectors of symbols observed by the trailing heads of the compressor $C$ in each of the $k$ \emph{next} time steps.

        To this end, define $\delta'_{Q'}:Q'\times\Sigma^h\to Q'$ by
        \[
            \delta'_{Q'}\big((q,\vec{u},j,w),\vec\sigma\big)=
            \begin{cases}
                \big(q_{|w|+1}(w\vec\sigma[h]),\vec{u}[1:k]\vec\sigma[1:h],(j+1)\bmod k,w\vec\sigma[h]\big)&\text{if }|w|<\ell\\
                \big(\delta_Q(q,\vec{u}[0]\vec\sigma[h]),\vec{u}[1:k]\vec\sigma[1:h],(j+1)\bmod k,w\big)&\text{if }|w|=\ell,
            \end{cases}
        \]
        where $q_{|w|+1}(w\vec\sigma[h])\in Q$ is the $Q$-state reached by $C$ after reading the first $|w|+1$ symbols of any sequence with $w\vec\sigma[h]$ as a prefix.
        
        Informally, $\delta'_{Q'}$ hard-codes the behavior of $C$ on short (length $\leq\ell$) prefixes in order to give $G_k$ time to move its trailing heads into position sufficiently ahead of $C$'s trailing heads, build up a length-$k$ record of trailing head observation vectors, and reach the end of a length-$k$ block. After this initial setup phase, $\delta'_{Q'}$ does the following in each time step.
        \begin{enumerate}
            \item Combine the oldest stored vector of observations by the trailing heads, $\vec{u}[0]\in\Sigma^{h-1}$, with the new symbol observed by the leading head, $\vec\sigma[h]\in\Sigma$, to simulate the corresponding $Q$-state transition from $C$ as if $\vec{u}[0]\vec\sigma[h]\in\Sigma^h$ were $C$'s new vector of observations from all heads.
            \item Delete the oldest stored vector of observations by the trailing heads, $\vec{u}[0]$, from the beginning of record of observations by the trailing heads:
            \[\vec{u}[0:k]\mapsto\vec{u}[1:k].\]
            \item Store the new vector $\vec\sigma[1:h]$ of observations by the trailing heads at the end of the record of observations, restoring the record's length to $k$:
            \[\vec{u}[1:k]\mapsto \vec{u}[1:k]\vec\sigma[1:h].\]
            \item Increment the counter $j$ (modulo $k$) to indicate the leading head's position inside a length-$k$ block.
            \item Leave $w$ unchanged; after the setup phase, $w=S[0:\ell]$ is constant.
        \end{enumerate}
        
        Define $q_0'=(q_0,\vec{u}_0,0,\lambda)$, where $\vec{u}_0\in(\Sigma^{h-1})^k$ is arbitrary; the record $\vec{u}_0$ will be completely overwritten before being used, so this initial value is irrelevant.

        \paragraph{Betting function.}
        The betting function $\beta:Q'\to\Delta_\Q(\Sigma)$ places only uniform bets during the setup phase; for all $(q,\vec{u},j,w)\in Q'$ such that $|w|<\ell$ and all $a\in\Sigma$,
        \[\beta(q,\vec{u},j,w)(a)=\frac{1}{|\Sigma|}.\]
        To define the bets $\beta$ places after that setup phase, when $|w|=\ell$, we first introduce some additional notation; we define the following partial functions to describe the way $G_k$ simulates the behavior of $C$ across multiple time steps.
        \begin{itemize}
            \item $\hat\delta:Q\times(\Sigma^{h-1})^*\times\Sigma^{*}\rightharpoonup Q$ is defined, for all $q\in Q$ and $\vec{u}\in(\Sigma^{h-1})^*$, recursively by $\hat\delta(q,\vec{u},\lambda)=q$
            and, for all $\vec\tau\in\Sigma^{h-1}$, $w\in\Sigma^{|\vec{u}|}$, and $a\in \Sigma$,
            \[\hat\delta(q,\vec{u}\vec\tau,wa)=\delta_Q(\hat\delta(q,\vec{u},w),\vec\tau a).\]
            \item Extend $\hat\delta:Q\times(\Sigma^{h-1})^*\times\Sigma^*\rightharpoonup Q$ by defining, for all $q\in Q$, $\vec{u}\in(\Sigma^{h-1})^*$, and $w\in\Sigma^{<|\vec{u}|}$,
            \[\hat\delta(q,\vec{u},w)=\hat\delta(q,\vec{u}[0:|w|],w).\]
            That is, when $w$ is shorter than the record $\vec{u}$ of observation vectors, we use the $|w|$ oldest observation vectors in $\vec{u}$ to compute $\hat\delta(q,\vec{u},w)$. Note that this is consistent with our definition of $\hat\delta(q,\vec{u},\lambda)$.
            \item $\hat\nu:Q\times (\Sigma^{h-1})^*\times\Sigma^*\rightharpoonup\{0,1\}^*$
            is defined, for all $q\in Q$ and $\vec{u}\in(\Sigma^{h-1})^*$, recursively by $\hat\nu(q,\vec{u},\lambda)=\lambda$
            and, for all $w\in\Sigma^{\leq |\vec{u}|}$ and $a\in \Sigma$,
            \[\hat\nu(q,\vec{u},wa)=\hat\nu(q,\vec{u},w)\nu(\hat\delta(q,\vec{u},w),a).\]
            \item $\gamma:Q\times(\Sigma^{h-1})^*\times 2^{\Sigma^*}\rightharpoonup [0,\infty)\cap\Q$ is defined, for all $q\in Q$, $\vec{u}\in(\Sigma^{h-1})^*$, and $A\subseteq\Sigma^{\leq|\vec{u}|+1}$, by
            \[\gamma(q,\vec{u},A)=\sum_{x\in A}2^{-|\hat\nu(q,\vec{u},x)|}.\]
        \end{itemize}
        Finally, for all $(q,\vec{u},j,w)\in Q'$ such that $|w|=\ell$, we define
        \[\beta(q,\vec{u},j,w)(a)=\frac{\gamma(q,\vec{u},a\Sigma^{k-j-1})}{\gamma(q,\vec{u},\Sigma^{k-j})}.\]
    \end{construction}

        To prove Lemma~\ref{lem:gamblergeneral}, we first prove the following three lemmas.
    \begin{lemma}\label{lem:gamblersuffix}
        In Construction~\ref{const:gambler}, for all $S\in\Sigma^\omega$ and all $j,k,m\in\N$ with $m\geq\ell/k$ and $j\leq k$,
        \begin{equation}\label{eq:gamblersuffix}
            \frac{d_{G_k}(S[0:mk+j])}{d_{G_k}(S[0:mk])}=|\Sigma|^j2^{-\sum\limits_{i=mk}^{mk+j-1}|\nu(q_i,S[i])|}\frac{\gamma(q_{mk+j},\vec{u}_{mk+j},\Sigma^{k-j})}{\gamma(q_{mk},\vec{u}_{mk},\Sigma^k)}.
        \end{equation}
    \end{lemma}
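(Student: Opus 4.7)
The plan is to unfold the martingale $d_{G_k}$ using its recursive definition, express the ratio as a telescoping product of the gambler's bets, and then show that this product collapses to the stated form using a one-step unrolling identity for $\gamma$. Since $m\geq\ell/k$ forces $mk\geq\ell$, every index in $\{mk,mk+1,\dots,mk+j-1\}$ lies past the setup phase, so the bets are given by the ``second case'' formula $\beta(q,\vec{u},i,w)(a)=\gamma(q,\vec{u},a\Sigma^{k-i-1})/\gamma(q,\vec{u},\Sigma^{k-i})$, and the counter satisfies $j_n = n\bmod k$, so at time $mk+i$ it equals $i$.

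Before touching $\gamma$, I would isolate two invariants that hold once $n\geq\ell$ and that make the simulation work: (i) the $Q$-component of $G_k$'s state at time $n$ equals $C$'s state $q_n$, and (ii) the record $\vec{u}_n$ coincides, in its first $k$ entries, with the next $k$ trailing-head observation vectors of $C$, i.e.\ $\vec{u}_n[i]$ is the vector $(S[\pi_1(n+i)],\dots,S[\pi_{h-1}(n+i)])$ that $C$'s trailing heads read at its time $n+i$. Invariant (i) follows by induction using that $\vec{u}_n[0]$ together with $\vec\sigma_n[h]=S[n]$ reconstructs $C$'s full head-observation vector at time $n$, so the update $q\mapsto\delta_Q(q,\vec{u}[0]\vec\sigma[h])$ mirrors $\delta_Q(q_n,\vec\sigma_n)=q_{n+1}$; invariant (ii) is a direct consequence of the construction of $\mu'$, which after time $n_0$ keeps each trailing head of $G_k$ exactly $k$ time steps ahead of the corresponding trailing head of $C$.

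The algebraic core is a single ``peel one step'' identity: for any $q$, any $\vec{u}$ of length at least $r$, and any $a\in\Sigma$,
\[
\gamma(q,\vec{u},a\Sigma^{r-1}) \;=\; 2^{-|\nu(q,a)|}\,\gamma\!\bigl(\delta_Q(q,\vec{u}[0]a),\,\vec{u}[1{:}],\,\Sigma^{r-1}\bigr).
\]
This follows from the recursive definition of $\hat\nu$ by establishing, by induction on $|y|$, the shift property $\hat\nu(q,\vec{u},ay)=\nu(q,a)\,\hat\nu(\delta_Q(q,\vec{u}[0]a),\vec{u}[1{:}],y)$, and then summing $2^{-|\hat\nu(q,\vec{u},ay)|}$ over $y\in\Sigma^{r-1}$. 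Applying this with $q=q_{mk+i}$, $\vec{u}=\vec{u}_{mk+i}$, $a=S[mk+i]$, $r=k-i$, and using invariants (i) and (ii) to identify $\delta_Q(q_{mk+i},\vec{u}_{mk+i}[0]S[mk+i])$ with $q_{mk+i+1}$ and $\vec{u}_{mk+i}[1{:}]$ with the prefix of $\vec{u}_{mk+i+1}$ relevant to $\gamma(\,\cdot\,,\,\cdot\,,\Sigma^{k-i-1})$, converts each bet into $2^{-|\nu(q_{mk+i},S[mk+i])|}\gamma(q_{mk+i+1},\vec{u}_{mk+i+1},\Sigma^{k-i-1})/\gamma(q_{mk+i},\vec{u}_{mk+i},\Sigma^{k-i})$.

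Finally, I would expand $d_{G_k}(S[0{:}mk+j])/d_{G_k}(S[0{:}mk])$ as $|\Sigma|^j$ times the product of these $j$ bets and observe that the $\gamma$ factors telescope, leaving $\gamma(q_{mk+j},\vec{u}_{mk+j},\Sigma^{k-j})/\gamma(q_{mk},\vec{u}_{mk},\Sigma^k)$ multiplied by the accumulated $2^{-\sum|\nu(q_i,S[i])|}$, which is exactly \eqref{eq:gamblersuffix}. The main obstacle is getting the invariants (i) and (ii) precisely right: the indexing between the gambler's rolling FIFO $\vec{u}_n$, the counter $j_n$, and the compressor's trailing-head positions $k$ steps in the future is fiddly, and the entire telescoping rests on the identification $\vec{u}_{mk+i+1}[0{:}k-i-1]=\vec{u}_{mk+i}[1{:}k-i]$ being valid precisely in the range needed to compute the gammas of $\Sigma^{k-i-1}$-strings. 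Once those bookkeeping matters are settled, the rest is essentially a one-line induction.
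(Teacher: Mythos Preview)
Your proposal is correct and follows essentially the same approach as the paper: an induction/telescoping on $j$ driven by the one-step identity $\gamma(q_n,\vec{u}_n,S[n]\Sigma^{k-j-1})=2^{-|\nu(q_n,S[n])|}\gamma(q_{n+1},\vec{u}_{n+1},\Sigma^{k-j-1})$, which the paper derives in the middle of its inductive step exactly as you sketch. If anything, you are more explicit than the paper about the invariants (i) and (ii) and the FIFO-indexing caveat $\vec{u}_{mk+i+1}[0{:}k-i-1]=\vec{u}_{mk+i}[1{:}k-i]$; the paper treats these as implicit in the notation $q_n,\vec{u}_n$ and in the extended definition of $\hat\delta$ that only consumes the needed prefix of $\vec{u}$.
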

    \begin{proof}
        Let $S\in\Sigma^\omega$ and $k,m\in\N$. We prove by induction on $j$ that~\eqref{eq:gamblersuffix} holds for all $j\leq k$. The $j=0$ case holds trivially. Fix any $j<k$ and suppose~\eqref{eq:gamblersuffix} holds. Then, letting $n=mk+j$,
        \begin{align*}
            \frac{d_{G_k}(S[0:n+1])}{d_{G_k}(S[0:mk])}&=|\Sigma|\beta(q_n,\vec{u}_n,j,S[0:\ell])(S[n])\frac{d_{G_k}(S[0:n])}{d_{G_k}(S[0:mk])}\\
            &=|\Sigma|\frac{\gamma(q_n,\vec{u}_n,S[n]\Sigma^{k-j-1})}{\gamma(q_n,\vec{u}_{n},\Sigma^{k-j})}\frac{d_{G_k}(S[0:n])}{d_{G_k}(S[0:mk])}\\
            &=|\Sigma|^{j+1}2^{-\sum\limits_{i=mk}^{n-1}|\nu(q_i,S[i])|}\frac{\gamma(q_n,\vec{u}_n,S[n]\Sigma^{k-j-1})}{\gamma(q_{mk},\vec{u}_{mk},\Sigma^k)},
        \end{align*}
        by the induction hypothesis. As
        \begin{align*}
            \gamma(q_n,\vec{u}_n,S[n]\Sigma^{k-j-1})
            &=\sum_{x\in S[n]\Sigma^{k-j-1}}2^{-|\hat\nu(q_n,\vec{u}_n,x)|}\\
            &=\sum_{x\in \Sigma^{k-j-1}}2^{-|\hat\nu(q_n,\vec{u}_n,S[n]x)|}\\
            &=\sum_{x\in \Sigma^{k-j-1}}2^{-|\nu(q_n,S[n])\hat\nu(q_{n+1},\vec{u}_{n+1},x)|}\\
            &=2^{-|\nu(q_n,S[n])|}\gamma(q_{n+1},\vec{u}_{n+1},\Sigma^{k-j-1}),
        \end{align*}
        this shows
        \[\frac{d_{G_k}(S[0:n+1])}{d_{G_k}(S[0:mk])}=|\Sigma|^{j+1}2^{-\sum\limits_{i=mk}^{n}|\nu(q_i,S[i])|}\frac{\gamma(q_{n+1},\vec{u}_{n+1},\Sigma^{k-j-1})}{\gamma(q_{mk},\vec{u}_{mk},\Sigma^k)}.\]
        That is,~\eqref{eq:gamblersuffix} holds with $j+1$ replacing $j$. By induction, we conclude that~\eqref{eq:gamblersuffix} holds for all $j\leq k$.
        \qed
    \end{proof}

    \begin{lemma}\label{lem:gamblerblocks}
        In Construction~\ref{const:gambler}, for all $S\in\Sigma^\omega$ and all $k,m\in\N$,
        \begin{equation}\label{eq:gamblerblocks}
            d_{G_k}(S[0:mk])\geq \frac{|\Sigma|^{mk-\ell}2^{-|C(S[0:mk])|}}{\prod_{i=\ell/k}^{m-1}\gamma(q_{ik},\vec{u}_{ik},\Sigma^k)}
        \end{equation}
    \end{lemma}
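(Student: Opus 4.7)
The plan is to telescope Lemma~\ref{lem:gamblersuffix} across the blocks of length $k$ between times $\ell$ and $mk$, and then identify the resulting sum of $\nu$-lengths with (a part of) the compressor's output length. Since $\ell = k\lceil(n_0+k)/k\rceil$ is by definition a multiple of $k$, this block factorization is well-defined and every index in sight is an integer.

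First I would note that $d_{G_k}(S[0:\ell]) = 1$: for every state $(q,\vec u,j,w)$ with $|w|<\ell$, the betting function $\beta$ is uniform on $\Sigma$, so the per-symbol martingale update factor $|\Sigma|\beta(\cdot)(a)$ equals $1$ throughout the setup phase. Next, I would apply Lemma~\ref{lem:gamblersuffix} with $j=k$ to each block, using that $\gamma(q,\vec u,\Sigma^0) = 2^{-|\hat\nu(q,\vec u,\lambda)|} = 1$ by definition of $\hat\nu$, which gives
\[
\frac{d_{G_k}(S[0:(i+1)k])}{d_{G_k}(S[0:ik])} = \frac{|\Sigma|^k\,2^{-\sum_{j=ik}^{(i+1)k-1}|\nu(q_j,S[j])|}}{\gamma(q_{ik},\vec u_{ik},\Sigma^k)}
\]
for each $i \in \{\ell/k,\ldots,m-1\}$. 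Multiplying these ratios together and combining with $d_{G_k}(S[0:\ell])=1$ yields
\[
d_{G_k}(S[0:mk]) = \frac{|\Sigma|^{mk-\ell}\,2^{-\sum_{j=\ell}^{mk-1}|\nu(q_j,S[j])|}}{\prod_{i=\ell/k}^{m-1}\gamma(q_{ik},\vec u_{ik},\Sigma^k)}.
\]

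The remaining step, which I expect to be the main conceptual obstacle, is showing that $\sum_{j=\ell}^{mk-1}|\nu(q_j,S[j])| \leq |C(S[0:mk])|$. This reduces to verifying that the $q$-component of $G_k$'s $Q'$-state faithfully tracks $C$'s actual $Q$-state after the setup phase. By the design of $\delta'_{Q'}$, during setup the $q$-component is advanced deterministically to whatever $C$'s $Q$-state would be on the prefix read so far, so the two agree at time $\ell$. After setup, $G_k$'s trailing heads sit exactly $k$ time steps ahead of $C$'s (by the movement rule $\mu'$ together with the choice of $n_0$), so the oldest stored observation vector $\vec u[0]$ at time $n\geq\ell$ equals $C$'s trailing-head observation vector at time $n$, and the simulated transition $\delta_Q(q,\vec u[0]\vec\sigma[h])$ matches $C$'s actual $Q$-transition. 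Consequently $|\nu(q_j,S[j])|$ equals the length of $C$'s $j$-th output bit string for every $j\geq\ell$, the telescoped sum equals $|C(S[\ell:mk])|\leq |C(S[0:mk])|$, and the displayed equality above becomes the claimed inequality~\eqref{eq:gamblerblocks}.
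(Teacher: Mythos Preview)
Your proof is correct and follows essentially the same approach as the paper: telescope Lemma~\ref{lem:gamblersuffix} with $j=k$ across blocks, use $\gamma(\cdot,\cdot,\Sigma^0)=1$, and identify the accumulated $\nu$-lengths with part of $|C(S[0{:}mk])|$. Your explicit justification that the $q$-component of $G_k$'s $Q'$-state coincides with $C$'s $Q$-state for all times $\geq\ell$ is a point the paper uses silently when it passes from $2^{-\sum|\nu(q_i,S[i])|}2^{-|C(S[0:mk])|}$ to $2^{-|C(S[0:(m+1)k])|}$, so you are being more careful there. The only omission is the trivial base case $m<\ell/k$: your telescoping argument presumes $mk\geq\ell$, and you should note separately (as the paper does) that for $mk\leq\ell$ the right-hand side is at most $|\Sigma|^{mk-\ell}\leq 1=d_{G_k}(S[0{:}mk])$.
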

    \begin{proof}
        Let $S\in\Sigma^\omega$ and $k\in\N$. We prove by induction on $m$ that~\eqref{eq:gamblerblocks} holds for all $m\in\N$. The martingale value is 1 on all strings of length $\leq\ell$, so~\eqref{eq:gamblerblocks} holds trivially when $m\leq\ell/k$. Fix any $m\geq \ell/k$ and suppose~\eqref{eq:gamblerblocks} holds. Then applying Lemma~\ref{lem:gamblersuffix} with $j=k$ and letting $n=(m+1)k$ gives
        \begin{align*}
            d_{G_k}(S[0:n])
            &\geq 2^{-\sum\limits_{i=mk}^{n-1}|\nu(q_i,S[i])|}\frac{\gamma(q_{n},\vec{u}_{n},\{\lambda\})}{\gamma(q_{mk},\vec{u}_{mk},\Sigma^k)}\frac{|\Sigma|^{n-\ell}2^{-|C(S[0:mk])|}}{\prod_{i=\ell/k}^{m-1}\gamma(q_{ik},\vec{u}_{ik},\Sigma^k)}\\
            &=\frac{|\Sigma|^{n-\ell}2^{-|C(S[0:n])|}}{\prod_{i=\ell/k}^{m}\gamma(q_{ik},\vec{u}_{ik},\Sigma^k)}.
        \end{align*}
        That is,~\eqref{eq:gamblerblocks} holds with $m+1$ replacing $m$. By induction, we conclude that~\eqref{eq:gamblerblocks} holds for all $m\in\N$.
        \qed
    \end{proof}
    \begin{lemma}\label{lem:gamblerdivisible}
        In Construction~\ref{const:gambler}, if $C$ is information-lossless, then for all $S\in\Sigma^\omega$ and all $k,m\in\N$,
        \[d_{G_k}(S[0:mk])\geq \frac{|\Sigma|^{mk-\ell}2^{-|C(S[0:mk])|}}{((kM+1)|T||Q|)^m},\]
        where $M=\max\{|\nu(q,a)|\mid q\in Q,\,a\in\Sigma\}$.
    \end{lemma}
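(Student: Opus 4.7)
The plan is to combine Lemma~\ref{lem:gamblerblocks} with a uniform upper bound on each factor $\gamma(q_{ik},\vec{u}_{ik},\Sigma^k)$ appearing in its denominator. Concretely, I will prove the key claim that for every reachable state $(q,\vec{u},j,w)$ of $G_k$ with $|w|=\ell$,
\[\gamma(q,\vec{u},\Sigma^k)\leq(kM+1)|T||Q|.\]
Granting this, the product $\prod_{i=\ell/k}^{m-1}\gamma(q_{ik},\vec{u}_{ik},\Sigma^k)$ in~\eqref{eq:gamblerblocks} is at most $((kM+1)|T||Q|)^{m-\ell/k}\leq((kM+1)|T||Q|)^m$, since $(kM+1)|T||Q|\geq 1$. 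Substituting into Lemma~\ref{lem:gamblerblocks} gives exactly the desired inequality.

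The claim itself is a counting argument rooted in information-losslessness. Each atomic output $\nu(q,a)$ has length at most $M$, so $|\hat\nu(q,\vec{u},x)|\in\{0,1,\ldots,kM\}$ for every $x\in\Sigma^k$. The heart of the proof is to show that the map
\[\Phi_{t,q,\vec{u}}:x\mapsto\bigl(\hat\nu(q,\vec{u},x),\,\delta_T^k(t),\,\hat\delta(q,\vec{u},x)\bigr)\]
is injective on $\Sigma^k$, whenever $(t,q)$ is a state reached by $C$ on some input $y$ and $\vec{u}\in(\Sigma^{h-1})^k$ records the vectors the trailing heads of $C$ will observe during the next $k$ time steps. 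Once $\Phi_{t,q,\vec{u}}$ is injective, I partition $\Sigma^k$ by final state (at most $|T||Q|$ classes) and then, within each class, by output length; each output length $\ell\in\{0,\ldots,kM\}$ admits at most $2^\ell$ distinct binary strings, so
\[\gamma(q,\vec{u},\Sigma^k)=\sum_{x\in\Sigma^k}2^{-|\hat\nu(q,\vec{u},x)|}\leq|T||Q|\sum_{\ell=0}^{kM}2^\ell\cdot 2^{-\ell}=(kM+1)|T||Q|,\]
which is the claim.

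The main obstacle is establishing the injectivity of $\Phi_{t,q,\vec{u}}$, since information-losslessness of $C$ is stated globally (for the map $w\mapsto(C(w),t_{|w|},q_{|w|})$ from the initial state $(t_0,q_0)$), whereas the claim must be applied locally, from an arbitrary reachable state and with trailing observations drawn from the record $\vec{u}_{ik}$ maintained by $G_k$. I will bridge this gap by exploiting the fact, guaranteed by Construction~\ref{const:gambler}, that $\vec{u}_{ik}$ is precisely the sequence of trailing-head observations $C$ will make over time steps $[ik,ik+k)$ when processing $S$. Fixing a prefix $y=S[0:ik]$ that drives $C$ from $(t_0,q_0)$ to $(t_{ik},q_{ik})$, any two distinct continuations $x\neq x'\in\Sigma^k$ with identical images under $\Phi_{t_{ik},q_{ik},\vec{u}_{ik}}$ would yield two distinct inputs $yx,yx'$ on which $C$ produces the same output and ends in the same $(T,Q)$-state—contradicting the information-losslessness of $C$. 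With $\Phi$ injective, the counting bound above closes the argument and the proof of the lemma is immediate from Lemma~\ref{lem:gamblerblocks}.
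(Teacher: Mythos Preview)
Your proposal is correct and follows essentially the same route as the paper: bound each factor $\gamma(q_{ik},\vec{u}_{ik},\Sigma^k)$ by $(kM+1)|T||Q|$ via the information-lossless property, then substitute into Lemma~\ref{lem:gamblerblocks}. Your treatment is in fact more careful than the paper's, which asserts the bound for \emph{all} $q$ and $\vec{u}$ without justifying why information-losslessness (a global property from the initial state) applies locally; your reachability argument---extending $y=S[0:ik]$ by $x$ and $x'$ and using that the trailing-head positions during steps $[ik,ik+k)$ lie in $[0,ik)$ once $ik\geq\ell$---supplies exactly the missing justification. One cosmetic point: avoid reusing $\ell$ for the output length, since $\ell$ already denotes the setup-phase threshold in Construction~\ref{const:gambler}.
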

    \begin{proof}
        We closely follow the proof of Lemma 7.6 in~\cite{FSD}, which itself follows the proof of Lemma 2 in~\cite{LZ78}. Assume $C$ is information-lossless, and let $S\in\Sigma^\omega$ and $k,m\in\N$. For all $q\in Q$, $\vec{u}\in(\Sigma^{h-1})^k$, and $z\in\{0,1\}^*$,
        \[\left|\left\{w\in\Sigma^k:\hat\nu(q,\vec{u},w)=z\right\}\right|\leq |T\times Q|,\]
        as $C$ is information-lossless. Hence, for each $i\leq m$,
        \begin{align*}
            \gamma(q_{ik},\vec{u}_{ik},\Sigma^k)&=\sum_{w\in\Sigma^k}2^{-|\hat\nu(q_{ik},\vec{u}_{ik},w)|}\\
            &=\sum_{z\in\{0,1\}^{\leq kM}}\left|\left\{w\in\Sigma^k\mid\hat\nu(q_{ik},\vec{u}_{ik},w)=z\right\}\right|2^{-|z|}\\
            &\leq|T||Q|\sum_{z\in\{0,1\}^{\leq kM}}2^{-|z|}\\
            &=|T||Q|\sum_{L=0}^{kM}\sum_{z\in\{0,1\}^L}2^{-L}\\
            &=(kM+1)|T||Q|.
        \end{align*}
        Therefore, by Lemma~\ref{lem:gamblerblocks},
        \begin{align*}
            d_{G_k}(S[0:mk])&\geq \frac{|\Sigma|^{mk-\ell}2^{-|C(S[0:mk])|}}{((kM+1)|T||Q|)^{m-\ell/k}}\\
            &\geq \frac{|\Sigma|^{mk-\ell}2^{-|C(S[0:mk])|}}{((kM+1)|T||Q|)^m}.
        \end{align*}
        \qed
    \end{proof}

    \begin{lemma}\label{lem:gamblergeneral}
        In Construction~\ref{const:gambler}, if $C$ is information-lossless, then for all $S\in\Sigma^\omega$ and all $k,n\in\N$,
        \[d_{G_k}(S[0:n])\geq 2^{-|C(S[0:n])|}\frac{|\Sigma|^{n-\ell}2^{-kM}}{((kM+1)|T||Q|)^{\lceil n/k\rceil}},\]
        where $M=\max\{|\nu(q,a)|\mid q\in Q,\,a\in\Sigma\}$.
    \end{lemma}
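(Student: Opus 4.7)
The plan is to reduce to Lemma~\ref{lem:gamblerdivisible} by writing $n=mk+j$ with $m\in\N$ and $0\leq j<k$, so that $\lceil n/k\rceil=m$ when $j=0$ and $\lceil n/k\rceil=m+1$ when $j>0$. When $j=0$ the target inequality is already implied by Lemma~\ref{lem:gamblerdivisible}, since the extra factor $2^{-kM}\leq 1$ on the right-hand side only weakens the bound; so the interesting case is $0<j<k$.

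In that case I would combine Lemma~\ref{lem:gamblersuffix} applied at the pair $(m,j)$ with Lemma~\ref{lem:gamblerdivisible} applied at $m$ to obtain
\[
d_{G_k}(S[0:n]) \geq |\Sigma|^{j}\, 2^{-\sum_{i=mk}^{n-1}|\nu(q_i,S[i])|}\,\frac{\gamma(q_n,\vec{u}_n,\Sigma^{k-j})}{\gamma(q_{mk},\vec{u}_{mk},\Sigma^k)} \cdot \frac{|\Sigma|^{mk-\ell}\,2^{-|C(S[0:mk])|}}{((kM+1)|T||Q|)^{m}}.
\]
Using $|C(S[0:n])|=|C(S[0:mk])|+\sum_{i=mk}^{n-1}|\nu(q_i,S[i])|$ and $|\Sigma|^{j}\cdot|\Sigma|^{mk-\ell}=|\Sigma|^{n-\ell}$, the $|\Sigma|$ and the $2^{-(\cdot)}$ factors collapse to $|\Sigma|^{n-\ell}2^{-|C(S[0:n])|}$, reducing the problem to showing
\[
\frac{\gamma(q_n,\vec{u}_n,\Sigma^{k-j})}{\gamma(q_{mk},\vec{u}_{mk},\Sigma^{k})} \;\geq\; \frac{2^{-kM}}{(kM+1)|T||Q|}.
\]

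The denominator bound is obtained by the same information-losslessness computation that appears inside the proof of Lemma~\ref{lem:gamblerdivisible}: at most $|T||Q|$ strings $w\in\Sigma^k$ can share any given $\hat\nu$-image $z\in\{0,1\}^{\leq kM}$, which yields $\gamma(q_{mk},\vec{u}_{mk},\Sigma^k)\leq(kM+1)|T||Q|$. For the numerator, I use only that each summand $2^{-|\hat\nu(q_n,\vec{u}_n,x)|}$ with $x\in\Sigma^{k-j}$ satisfies $|\hat\nu(q_n,\vec{u}_n,x)|\leq (k-j)M\leq kM$, so each summand is at least $2^{-kM}$; since $k-j\geq 1$ the set $\Sigma^{k-j}$ is nonempty, and $\gamma(q_n,\vec{u}_n,\Sigma^{k-j})\geq 2^{-kM}$ follows.

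I expect the main obstacle to be purely bookkeeping: carefully matching the residual partial-block factors produced by Lemma~\ref{lem:gamblersuffix} against the block-aligned bound of Lemma~\ref{lem:gamblerdivisible}, and verifying that the extra cost incurred is exactly the single factor $(kM+1)|T||Q|\cdot 2^{kM}$ corresponding to rounding $n/k$ up to $m+1$. No new construction or combinatorial ingredient beyond the length bound $M$ and the two earlier lemmas is needed.
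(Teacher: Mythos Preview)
Your argument is correct but takes a different route from the paper. You round $n$ \emph{down} to $mk=\lfloor n/k\rfloor\,k$, control the partial block via the exact identity of Lemma~\ref{lem:gamblersuffix}, and then bound the residual ratio $\gamma(q_n,\vec{u}_n,\Sigma^{k-j})/\gamma(q_{mk},\vec{u}_{mk},\Sigma^{k})$ by re-invoking the information-losslessness count for the denominator and a crude single-term lower bound for the numerator. The paper instead rounds $n$ \emph{up}: it sets $j=(k-(n\bmod k))\bmod k$, chooses $z\in\Sigma^{j}$ minimizing $d_{G_k}(S[0:n]z)$, uses the martingale averaging property to get $d_{G_k}(S[0:n])\geq|\Sigma|^{-j}d_{G_k}(S[0:n]z)$, and then applies Lemma~\ref{lem:gamblerdivisible} directly to the block-aligned string $S[0:n]z$, absorbing the at most $jM\leq kM$ extra output bits produced on the appended symbols. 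The paper's route is shorter because it never reopens the $\gamma$ machinery or Lemma~\ref{lem:gamblersuffix}; your route is equally valid and makes the provenance of the two extra factors $2^{-kM}$ and $(kM+1)|T||Q|$ more explicit. One small omission in your write-up: Lemma~\ref{lem:gamblersuffix} requires $m\geq\ell/k$, so you should dispose of the range $n<\ell$ separately (trivial, since then $d_{G_k}(S[0:n])=1$ while the right-hand side is at most $|\Sigma|^{n-\ell}\leq 1$).
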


    \begin{proof}
        Assume $C$ is information-lossless, and let $S\in\Sigma^\omega$ and $n\in\N$. Let
        \[j=(k - (n\bmod k))\bmod k,\]
        $m=\frac{n+j}{k}\in\N$, and $z\in\Sigma^j$ minimizing $d_{G_k}(S[0:n]z)$. Then $S[0:n]z\in\Sigma^{mk}$, so Lemma~\ref{lem:gamblerdivisible} applies, and we have
        \begin{align*}
            d_{G_k}(S[0:n])&\geq |\Sigma|^{-j}d_{G_k}(S[0:n]z)\\
            &\geq \frac{|\Sigma|^{mk-\ell-j}2^{-|C(S[0:n])|}2^{-jM}}{((kM+1)|T||Q|)^m}\\
            &\geq 2^{-|C(S[0:n])|}\frac{|\Sigma|^{n-\ell}2^{-kM}}{((kM+1)|T||Q|)^{\lceil n/k\rceil}}.
        \end{align*}
    \end{proof}

\section{From Gambler to Compressor}\label{sec:g2c}
    In this section, we show how to construct a compressor whose performance will approximate that of a given gambler. To avoid dividing by zero, it will be useful for the gambler to always make positive bets on each symbol.
    \begin{definition}
        For $h\in\Z^+$, an $h$-FSG
        \[G=(T\times Q,\Sigma,\delta,\mu,\beta,(t_0,q_0),c_0)\]
        is \emph{non-vanishing} if, for all $q\in Q$ and $a\in\Sigma$, $\beta(q)(a)>0$.
    \end{definition}
    This constraint makes little difference as every $h$-FSG can be approximated by a non-vanishing $h$-FSG, in the following sense.
    \begin{lemma}\label{lem:nv}
        For every $h$-FSG
        \[G=(T\times Q,\Sigma,\delta,\mu,\beta,(t_0,q_0),c_0)\]
        and all $\varepsilon>0$, there is a function $\beta':Q\to\Delta_\Q(\Sigma)$ such that
        \[G'=(T\times Q,\Sigma,\delta,\mu,\beta',(t_0,q_0),c_0)\]
        is a non-vanishing $h$-FSG and, for all $S\in\Sigma^\omega$ and $n\in\N$,
        \[d_{G'}(S[0:n])\geq 2^{-\varepsilon n}d_G(S[0:n]).\]
    \end{lemma}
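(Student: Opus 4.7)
The plan is to obtain $G'$ from $G$ by mixing its betting function with the uniform distribution: for a rational parameter $\eta\in(0,1)$ to be chosen, set
\[\beta'(q)(a)=(1-\eta)\beta(q)(a)+\frac{\eta}{|\Sigma|},\]
and leave every other component of $G'$ identical to the corresponding component of $G$. The mixture is a convex combination of rational distributions and so lies in $\Delta_\Q(\Sigma)$, and positivity of $\eta$ forces $\beta'(q)(a)\geq \eta/|\Sigma|>0$, so $G'$ is non-vanishing.

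Because $G$ and $G'$ share $T$, $Q$, $\delta$, $\mu$, and $(t_0,q_0)$, on any input $S\in\Sigma^\omega$ they produce the same state sequence $(t_n,q_n)_{n\in\N}$. Expanding each martingale as a product of one-step bets then yields, whenever $d_G(S[0:n])>0$,
\[\frac{d_{G'}(S[0:n])}{d_G(S[0:n])}=\prod_{i=0}^{n-1}\frac{\beta'(q_i)(S[i])}{\beta(q_i)(S[i])}\geq(1-\eta)^n,\]
since the mixture definition gives $\beta'(q)(a)\geq(1-\eta)\beta(q)(a)$ pointwise. When $d_G(S[0:n])=0$, the desired inequality holds trivially because $d_{G'}$ is non-negative.

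To finish, I would choose any rational $\eta\in(0,1-2^{-\varepsilon}]$; this interval is nonempty since $\varepsilon>0$, and the rationals are dense in it. With this choice, $(1-\eta)^n\geq 2^{-\varepsilon n}$, completing the bound for all $n$. No step poses a serious obstacle: the main things to be careful about are keeping $\eta$ rational so that $\beta'$ remains in $\Delta_\Q(\Sigma)$, and treating the $d_G(S[0:n])=0$ case separately so that the ratio-based argument never requires division by zero.
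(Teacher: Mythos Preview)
Your proposal is correct and is precisely the standard mixing-with-uniform argument that underlies Lemma~3.11 of~\cite{FSD}, which is what the paper's proof invokes; the paper simply cites that lemma and notes that only the betting function changes, so the extra heads are irrelevant. In other words, you have written out explicitly the argument the paper defers to.
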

    \begin{proof}
        This is essentially Lemma 3.11 of~\cite{FSD}, generalized to finite (non-unary) alphabets and stated for $h$-head finite-state gamblers. As only the betting function is involved in the transformation $G\mapsto G'$, the presence of additional heads is irrelevant to the argument. 
    \end{proof}

    \begin{construction}\label{const:compressor}
        Let $k\geq 1$, and let
        \[G=(T\times Q,\Sigma,\delta,\mu,\beta,(t_0,q_0),c_0)\]
        be a non-vanishing $h$-head finite-state gambler. We define the $h$-head finite-state compressor
        \[C_k=(T' \times Q',\Sigma,\delta',\mu',\nu,(t_0',q_0'))\]
        as follows.
        
        \paragraph{Head movements.} The components $T'$, $\mu'$, and $t_0'$ are identical to those defined in Construction~\ref{const:gambler}, but now based on the components of the gambler $G$ rather than the compressor $C$. In particular, the trailing heads of $C_k$ will eventually be $k$ time steps ahead of those of $G$, just as the trailing heads of $G_k$ were eventually $k$ time steps ahead of those of $C$ in Construction~\ref{const:gambler}. Let $\ell$ also be as defined in Construction~\ref{const:gambler}.

        \paragraph{$Q'$-state transitions.} The components $Q'$, $\delta'_{Q'}$, and $q'_0$ are also very similar to those defined in Construction~\ref{const:gambler}, with the same intuition of a setup phase and length-$k$ records of trailing head observation vectors. The only difference is that a $Q'$-state will now, within each length-$k$ block, store the string read so far within that block rather than only the length $j$ of that string. Formally,
        \[Q'=Q\times (\Sigma^{h-1})^k\times \Sigma^{\leq k}\times\Sigma^{\leq \ell},\]
        and $\delta'_{Q'}:Q'\times\Sigma^h\to Q'$ is defined by
        \begin{align*}&\delta'_{Q'}\big((q,\vec{u},z,w),\vec\sigma\big)=\\
                &\quad
            \begin{cases}
                    \big(q_{|w|+1}(w\vec\sigma[h]),\vec{u}[1:k]\vec\sigma[1:h],z\vec\sigma[h],w\vec\sigma[h]\big)&\text{if }|w|<\ell\text{ and }|z|<k-1\\
                    \big(q_{|w|+1}(w\vec\sigma[h]),\vec{u}[1:k]\vec\sigma[1:h],\lambda,w\vec\sigma[h]\big)&\text{if }|w|<\ell\text{ and }|z|=k-1\\
                    \big(\delta_Q(q,\vec{u}[0]\vec\sigma[h]),\vec{u}[1:k]\vec\sigma[1:h],z\vec\sigma[h],w\big)&\text{if }|w|=\ell\text{ and }|z|<k-1\\
                    \big(\delta_Q(q,\vec{u}[0]\vec\sigma[h]),\vec{u}[1:k]\vec\sigma[1:h],\lambda,w\big)&\text{if }|w|=\ell\text{ and }|z|=k-1,
            \end{cases}
        \end{align*}
        where $q_{|w|+1}(w\vec\sigma[h])\in Q$ is the $Q$-state reached by $G$ after reading the first $|w|+1$ symbols of any sequence with $w\vec\sigma[h]$ as a prefix.

        Define the initial $Q'$-state of $C_k$ to be $q_0'=(q_0,\vec{u}_0,\lambda,\lambda)$.

        \paragraph{Output function.} During the setup phase, the compressor will not attempt to compress; it will output, for each symbol $a$, the binary string $\Phi(a)$, where $\Phi:\Sigma\to\{0,1\}^{\lceil\log|\Sigma|\rceil}$ is an arbitrary injective function.

        For compression after the setup phase, define the function
        \[\tilde\beta:Q\times(\Sigma^{h-1})^k\times\Sigma^{\leq k}\to[0,1]\]
        recursively by, for all $q\in Q$ and $\vec{u}\in(\Sigma^{h-1})^k$,
        \[\tilde\beta(q,\vec{u},\lambda)=1,\]
        and for all $w\in\Sigma^{<|\vec{u}|}$ and $a\in\Sigma$,
        \[\tilde\beta(q,\vec{u},wa)=\tilde\beta(q,\vec{u},w)\beta(\hat{\delta}(q,\vec{u},w))(a),\]
        where $\hat{\delta}$ is the same function defined in Construction~\ref{const:gambler}, now using the $Q$-state transition function of the gambler $G$ as the underlying $\delta_Q$.

        Then define, for each $q\in Q$ and $\vec{u}\in(\Sigma^{h-1})^k$, the probability measure $\hat\beta_{q,\vec{u}}:\Sigma^k\to[0,1]$ by
        \[\hat\beta_{q,\vec{u}}(w)=\tilde\beta(q,\vec{u},w).\]
        Note that for all $n$, if $\vec{u}=(\vec\sigma_n,\ldots,\vec\sigma_{n+k-1})$, $q=q_n$, and $w=S[n:n+k]$, then $C_k$ will accurately simulate the gambling of $G$ on that interval, in the sense that
        \[\hat\beta_{q,\vec{u}}(w)=\prod_{j=n}^{n+k-1}\beta(q_j)(S[j]).\]
        Let $\Theta_{q,\vec{u}}:\Sigma^k\to\{0,1\}^*$ be the Shannon--Fano--Elias code~(see~\cite{CovTho06}) for the measure $\hat\beta_{q,\vec{u}}$. Then for all $w\in\Sigma^k$,
        \[|\Theta_{q,\vec{u}}(w)|=1+\left\lceil-\log\hat\beta_{q,\vec{u}}(w)\right\rceil.\]
        Using this code, we define the output function $\nu:Q'\times\Sigma\to\{0,1\}^*$ by
        \[
            \nu((q,\vec{u},z,w),a)=
            \begin{cases}
                \Phi(a)&\text{if }|w|<\ell\\
                \lambda&\text{if }|w|=\ell\text{ and }|z|<k-1\\
                \Theta_{q,\vec{u}}(wa)&\text{if }|w|=\ell\text{ and }|z|=k-1.
            \end{cases}
        \]            
    \end{construction}

    \begin{lemma}\label{lem:compressorgeneral}
        In Construction~\ref{const:compressor}, for all $k\in\N$, $C_k$ is information-lossless and, for all $S\in\Sigma^\omega$ and $n\in\N$,
        \[|C_k(S[0:n])|\leq\frac{2n}{k}+(n+\ell)\log|\Sigma|+\ell+\log c_0-\log d_G(S[0:n]).\]
    \end{lemma}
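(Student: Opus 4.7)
My plan is to prove the two claims of the lemma separately: information-losslessness, and the output-length bound.

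For information-losslessness, I would exhibit an explicit decoding procedure that reads the output string together with the final state and reconstructs the input. The output decomposes naturally: an initial setup segment of exactly $\ell\lceil\log|\Sigma|\rceil$ bits (produced by $\Phi$ on the first $\ell$ input symbols, provided the input is that long), followed by a concatenation of Shannon--Fano--Elias codewords, one per completed length-$k$ block after the setup phase, followed by nothing for any trailing partial block. The setup segment is decoded symbol-by-symbol using $\Phi^{-1}$, recovering $S[0:\ell]$ and in particular the post-setup $(q,\vec u)$-state of $C_k$. At each subsequent block boundary, the decoder knows $\hat\beta_{q,\vec u}$, so it parses off the next SFE codeword (SFE codes are prefix-free) to recover the block, applies $\delta'_{Q'}$ to update the state, and iterates. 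The contents of any trailing incomplete block are read directly from the $z$-component of the given final state, and the input length is recovered by counting codewords and adding $|z|$. Since every step is deterministic, this procedure inverts $w\mapsto(C_k(w),t'_{|w|},q'_{|w|})$.

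For the length bound, I would tally contributions from the setup phase, from completed post-setup blocks, and from the trailing partial block. The setup produces $\ell$ outputs of length $\lceil\log|\Sigma|\rceil\leq\log|\Sigma|+1$ each, contributing at most $\ell\log|\Sigma|+\ell$ bits; the trailing partial block contributes $0$ bits by the definition of $\nu$. Let $r=\lfloor(n-\ell)/k\rfloor$ and, for $i<r$, let $(q_i^\star,\vec u_i^\star)$ be the $Q$-state and record at the start of the $i$-th post-setup block and $B_i=S[\ell+ik:\ell+(i+1)k]$. By the Shannon--Fano--Elias length guarantee, the $i$-th block outputs at most $2-\log\hat\beta_{q_i^\star,\vec u_i^\star}(B_i)$ bits. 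The key identity, engineered by the head-movement design of Construction~\ref{const:compressor}, is that $\vec u_i^\star$ equals the sequence of trailing-head observation vectors $G$ would make during $B_i$, so unrolling the recursive definitions of $\hat\delta$ and $\tilde\beta$ gives
\[\hat\beta_{q_i^\star,\vec u_i^\star}(B_i)=\prod_{j=\ell+ik}^{\ell+(i+1)k-1}\beta(q_j^G)(S[j]),\]
where $q_j^G$ denotes $G$'s $Q$-state at time $j$. Multiplying across $i<r$, using $\beta(\cdot)(\cdot)\leq 1$ to extend the product to all $j<n$, and invoking $d_G(S[0:n])=c_0|\Sigma|^n\prod_{j=0}^{n-1}\beta(q_j^G)(S[j])$, the total SFE contribution is bounded by $2r+n\log|\Sigma|+\log c_0-\log d_G(S[0:n])\leq 2n/k+n\log|\Sigma|+\log c_0-\log d_G(S[0:n])$. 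Adding the three contributions yields the claimed inequality.

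The main obstacle I anticipate is justifying the alignment identity for $\vec u_i^\star$: that at the start of each post-setup block the record holds precisely the $k$ trailing-head observation vectors $G$ would make during that block. Verifying this requires tracking the bookkeeping inside $\delta'_{Q'}$ through the setup phase and using that $C_k$'s trailing heads eventually stay $k$ time steps ahead of those of $G$ (by $\mu'$) together with the fact that $\ell$ is a multiple of $k$ exceeding $n_0$, so the first post-setup block begins on a block boundary and the shift-and-append update of $\vec u$ preserves the alignment thereafter.
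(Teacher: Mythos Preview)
Your proposal is correct and follows essentially the same approach as the paper's proof: bound the setup output by $\ell\lceil\log|\Sigma|\rceil\leq\ell\log|\Sigma|+\ell$, bound each post-setup block's SFE codeword by $2-\log\hat\beta$, use the alignment of $C_k$'s trailing heads with $G$'s to turn the product of $\hat\beta$-values into $\prod_j\beta(q_j)(S[j])$ over the covered range, then extend to all $j<n$ via $\beta\leq 1$ and rewrite as $d_G(S[0:n])/(c_0|\Sigma|^n)$. The paper indexes blocks from $i=0$ and pads the sum with ``virtual'' SFE lengths for the setup blocks $0,\ldots,\ell/k-1$ before telescoping, whereas you index only post-setup blocks and pad the $-\log\beta$ sum directly; this is a purely cosmetic difference. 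Your treatment of information-losslessness (an explicit decoder) is more detailed than the paper's one-line appeal to prefix-freeness of the SFE code, but is in the same spirit; and you are right to flag the alignment identity for $\vec u_i^\star$ as the point needing care---the paper asserts it in its first displayed equation without further comment.
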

    \begin{proof}
        Let $k,n\in\N$ and $S\in\Sigma^\omega$. The compressor $C_k$ is information-lossless because the range of each $\Theta_{q,\vec{u}}$ is a prefix code. Let $S\in\Sigma^\omega$ and $k,n\in\N$. For each $i\in\N$, let $\vec{u}_i=(\vec\sigma_{ik},\ldots,\vec\sigma_{ik+k-1})$ and $w_i=S[ik:ik+k]$.  Then
        \begin{align*}
            |C_k(S[0:n])|&=\ell\lceil\log|\Sigma|\rceil + \sum_{i=\ell/k}^{\lfloor n/k\rfloor-1}|\Theta_{q_{ik},\vec{u}_i}(w_i)|\\
            &\leq \ell\lceil\log|\Sigma|\rceil + \sum_{i=0}^{\lfloor n/k\rfloor-1}|\Theta_{q_{ik},\vec{u}_i}(w_i)|\\
            &= \ell\lceil\log|\Sigma|\rceil + \sum_{i=0}^{\lfloor n/k\rfloor-1}\left(1+\left\lceil-\log\hat\beta_{q_{ik},\vec{u}_i}(w_i)\right\rceil\right)\\
            &\leq\ell\lceil\log|\Sigma|\rceil + \frac{2n}{k}-\sum_{i=0}^{\lfloor n/k\rfloor-1}\log\hat\beta_{q_{ik},\vec{u}_i}(w_i).
        \end{align*}
        We now bound the sum.
        \begin{align*}
            \sum_{i=0}^{\lfloor n/k\rfloor-1}\log\hat\beta_{q_{ik},\vec{u}_i}(w_i)&=\sum_{i=0}^{\lfloor n/k\rfloor-1}\log\left(\prod_{j=ik}^{ik+k-1}\beta(q_j)(S[j])\right)\\
            &=\sum_{i=0}^{\lfloor n/k\rfloor-1}\sum_{j=ik}^{ik+k-1}\log\big(\beta(q_j)(S[j])\big)\\
            &=\sum_{j=0}^{\lfloor n/k\rfloor k-1}\log\big(\beta(q_j)(S[j])\big)\\
            &\geq\sum_{j=0}^{n-1}\log\big(\beta(q_j)(S[j])\big)\\
            &=\log\left(\prod_{j=0}^{n-1}\beta(q_j)(S[j]) \right)\\
            &=\log\frac{d_G(S[0:n])}{c_0|\Sigma|^n}.
        \end{align*}
        Therefore,
        \[|C_k(S[0:n])|\leq\frac{2n}{k}+(n+\ell)\log|\Sigma|+\ell+\log c_0-\log d_G(S[0:n]).\]
        \qed
    \end{proof}

\section{Main Theorem}\label{sec:main}

    We now prove our main theorem, an equivalence, for each fixed number $h$ of heads, between $h$-head finite-state dimension and $h$-head finite-state compressibility.

    \begin{theorem}\label{thm:main}
        For all $h\in\Z^+$ and $S\in\Sigma^\omega$,
        \[\dimFS{(h)}(S)=\rhoFS{(h)}(S)\]
        and
        \[\DimFS{(h)}(S)=\RFS{(h)}(S).\]
    \end{theorem}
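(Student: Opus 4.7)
Each of the two equalities will be proved as a pair of inequalities; together, the four arguments correspond to running Construction~\ref{const:gambler} in one direction and Construction~\ref{const:compressor} in the other, then chasing the bounds in Lemmas~\ref{lem:gamblergeneral},~\ref{lem:compressorgeneral}, and~\ref{lem:nv} through the definitions of $s$-gale success and compression ratio.

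For $\dimFS{(h)}(S) \le \rhoFS{(h)}(S)$, I would fix $s>\rhoFS{(h)}(S)$, choose an information-lossless $h$-FSC $C$ achieving $\liminf_n |C(S[0{:}n])|/(n\log|\Sigma|) < \tfrac{1}{2}(s+\rhoFS{(h)}(S))$, and apply Construction~\ref{const:gambler} to produce $G_k$. Multiplying the bound of Lemma~\ref{lem:gamblergeneral} by $|\Sigma|^{(s-1)n}$ and taking logs gives
\[\tfrac{1}{n}\log d_{G_k}^{(s)}(S[0{:}n]) \;\ge\; s\log|\Sigma| \;-\; \tfrac{|C(S[0{:}n])|}{n} \;-\; \tfrac{\lceil n/k\rceil}{n}\log((kM+1)|T||Q|) \;-\; o_n(1).\]
Since $|T|,|Q|,M$ depend only on $C$, I can pick $k$ large enough that the third term falls below $\tfrac{1}{4}(s-\rhoFS{(h)}(S))\log|\Sigma|$; then along the subsequence realizing the $\liminf$ for $C$, the right-hand side is bounded below by a positive constant, so $d_{G_k}^{(s)}(S[0{:}n])\to\infty$ along that subsequence, giving $S\in\mathcal{S}^\infty[d_{G_k}^{(s)}]$ and hence $\dimFS{(h)}(S)\le s$. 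Sending $s\downarrow\rhoFS{(h)}(S)$ concludes.

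For $\rhoFS{(h)}(S) \le \dimFS{(h)}(S)$, I would fix $s>\dimFS{(h)}(S)$, pick $\varepsilon>0$ small, and choose an $h$-FSG $G$ with $S\in\mathcal{S}^\infty[d_G^{(s-\varepsilon)}]$. Apply Lemma~\ref{lem:nv} to obtain a non-vanishing $G'$ with $d_{G'}(S[0{:}n])\ge 2^{-\varepsilon n}d_G(S[0{:}n])$, then feed $G'$ into Construction~\ref{const:compressor} with a large $k$. Along the subsequence where $d_G^{(s-\varepsilon)}(S[0{:}n])\to\infty$, one has $-\tfrac{1}{n}\log d_{G'}(S[0{:}n])\le (s-\varepsilon-1)\log|\Sigma|+\varepsilon+o_n(1)$, which plugged into Lemma~\ref{lem:compressorgeneral} yields
\[\tfrac{|C_k(S[0{:}n])|}{n\log|\Sigma|} \;\le\; s \;+\; \tfrac{2}{k\log|\Sigma|} \;+\; \tfrac{\varepsilon}{\log|\Sigma|} \;+\; o_n(1).\]
Taking $\liminf_n$, then $k\to\infty$ and $\varepsilon\to 0$, gives $\rhoFS{(h)}(S)\le s$, and sending $s\downarrow\dimFS{(h)}(S)$ finishes.

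The strong-dimension equality $\DimFS{(h)}(S)=\RFS{(h)}(S)$ requires only cosmetic changes: everywhere the unsuccessful-on-a-subsequence pair (``$\liminf$ of $|C|/n$, $\limsup$ of $d^{(s)}_G$'') appears, replace it by the cofinitely-often pair (``$\limsup$ of $|C|/n$, $\liminf$ of $d^{(s)}_G$''). Because Lemmas~\ref{lem:gamblergeneral} and~\ref{lem:compressorgeneral} are pointwise in $n$, the same choices of $k$ carry the argument through verbatim, and the success-set membership upgrades from $\mathcal{S}^\infty$ to $\mathcal{S}_{\strong}^\infty$. The main (and essentially only) obstacle in the theorem itself is keeping the quantifier order straight: first fix $C$ or $G$ and thus $|T|,|Q|,M$; then, given a slack $s-\rhoFS{(h)}(S)$ or $s-\dimFS{(h)}(S)$, fix $k$ large enough to absorb the per-block overhead terms of order $1/k$ and $(\log k)/k$; and only then send $n\to\infty$ so that the $o_n(1)$ error from the setup phase and the fractional-block remainder vanishes. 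All the genuine difficulty has been discharged inside Constructions~\ref{const:gambler} and~\ref{const:compressor}; the theorem itself is just asymptotic bookkeeping.
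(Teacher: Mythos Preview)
Your proposal is correct and follows essentially the same approach as the paper: apply Construction~\ref{const:gambler} with Lemma~\ref{lem:gamblergeneral} to show $\dimFS{(h)}(S)\le\rhoFS{(h)}(S)$, apply Lemma~\ref{lem:nv} and Construction~\ref{const:compressor} with Lemma~\ref{lem:compressorgeneral} for the reverse inequality, and then choose $k$ large enough to absorb the $O(1/k)$ and $O((\log k)/k)$ overheads before sending $n\to\infty$. The paper parametrizes the slack with an $\varepsilon$ added to $\rhoFS{(h)}(S)$ or $\dimFS{(h)}(S)$ rather than your $s$ approaching from above, and it handles the strong case by the same textual substitution you indicate; these differences are purely cosmetic.
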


    \begin{proof}
        Let $h\in\Z^+$ and $S\in\Sigma^\omega$.
                
        To prove that $\dimFS{(h)}(S)\leq\rhoFS{(h)}(S)$, let $r=\rhoFS{(h)}(S)$ and $\varepsilon>0$. Then there is some information-lossless $h$-FSC
        \[C=((T\times Q),\Sigma,\delta,\mu,\nu,(t_0,q_0))\]
        such that
        \begin{equation}\label{eq:compressorbound}
            \liminf_{n\to\infty}\frac{|C(S[0:n])|}{n\log|\Sigma|}\leq r+\varepsilon.
        \end{equation}
        Let $M=\max\{|\nu(q,a)|\mid q\in Q,\,a\in\Sigma\}$, let $k\in\N$ be sufficiently large so that
        \[((kM+1)|T||Q|)^{\lceil n/k\rceil}\leq 2^{\varepsilon n}\]
        holds for all $n\geq k$, let $G_k$ be the result of applying Construction~\ref{const:gambler} to $C$ with this $k$, and let $\ell$ be as in this construction. Then by Lemma~\ref{lem:gamblergeneral}, for all $n\geq k$,
        \begin{align*}
            d_{G_k}(S[0:n])&\geq 2^{-|C(S[0:n])|}\frac{|\Sigma|^{n-\ell}2^{-kM}}{((kM+1)|T||Q|)^{\lceil n/k\rceil}}\\
            &\geq 2^{-|C(S[0:n])|-kM-\varepsilon n}|\Sigma|^{n-\ell},
        \end{align*}
        so
        \begin{align*}
            d_{G_k}^{(r+4\varepsilon)}(S[0:n])&\geq 2^{-|C(S[0:n])|-kM-\varepsilon n}|\Sigma|^{n-\ell+(r+4\varepsilon-1)n}\\
            &=2^{-|C(S[0:n])|-kM-\varepsilon n-\ell\log|\Sigma|+(r+4\varepsilon) n\log|\Sigma|}\\
            &\geq 2^{-|C(S[0:n])|-kM-\ell\log|\Sigma|+(r+3\varepsilon) n\log|\Sigma|}.
        \end{align*}
        By~\eqref{eq:compressorbound}, there are infinitely many $n\in\N$ such that
        \[|C(S[0:n])|\leq (r+2\varepsilon)n\log|\Sigma|.\]
        For all such $n$, we have
        \[d_{G_k}^{(r+4\varepsilon)}(S[0:n])\geq 2^{-kM-\ell\log|\Sigma|+\varepsilon n\log|\Sigma|},\]
        which diverges as $n$ approaches $\infty$, so
        \[S\in\mathcal{S}^\infty\big[d^{(r+4\varepsilon)}_{G_k}\big].\]
        Letting $\varepsilon$ approach 0, we have shown that $\dimFS{(h)}(S)\leq\rhoFS{(h)}(S)$.

        For the other direction, let $s=\dimFS{(h)}(S)$ and $\varepsilon>0$. Then there is some $h$-head finite-state gambler $G$ such that
        \[S\in \mathcal{S}^\infty\big[d_G^{(s+\varepsilon)}\big],\]
        so by Lemma~\ref{lem:nv}, there is also a non-vanishing $h$-head finite-state gambler $G'$ such that
        \[S\in \mathcal{S}^\infty\big[d_{G'}^{(s+2\varepsilon)}\big].\]
        Without loss of generality, assume $G'$ has initial capital $c_0=1$.
        
        Let $k=\lceil 2/\varepsilon\rceil$, let $C_k$ be the result of applying Construction~\ref{const:compressor} to $G'$ with this $k$, and let $\ell$ be as in this construction. By Lemma~\ref{lem:compressorgeneral}, $C_k$ is information-lossless, and
        \begin{align*}
            |C_k(S[0:n])|&\leq \frac{2n}{k}+(n+\ell)\log|\Sigma|+\ell-\log d_{G'}(S[0:n])\\
            &= \frac{2n}{k}+(n+\ell)\log|\Sigma|+\ell-\log\left(|\Sigma|^{(1-s-2\varepsilon)n} d_{G'}^{(s+2\varepsilon)}(S[0:n])\right)\\
            &\leq \left(\frac{2}{k}+s+2\varepsilon+\frac{(\log|\Sigma|+1)\ell}{n}\right)n\log|\Sigma|-\log d^{(s+2\varepsilon)}_{G'}(S[0:n])\\
            &\leq \left(s+3\varepsilon+\frac{(\log|\Sigma|+1)\ell}{n}\right)n\log|\Sigma|-\log d^{(s+2\varepsilon)}_{G'}(S[0:n])\\
            &\leq (s+4\varepsilon)n\log|\Sigma|-\log d^{(s+2\varepsilon)}_{G'}(S[0:n]),
        \end{align*}
        for all sufficiently large $n$. Since $S\in \mathcal{S}^\infty\big[d_{G'}^{(s+2\varepsilon)}\big]$, there are infinitely many $n\in\N$ satisfying
        \[d_{G'}^{(s+2\varepsilon)}(S[0:n])\geq 1.\]
        For each such $n$, we have
        \[|C_k(S[0:n])|\leq (s+4\varepsilon)n\log|\Sigma|,\]
        so
        \[\liminf_{n\to\infty}\frac{|C_k(S[0:n])|}{n\log|\Sigma|}\leq (s+4\varepsilon).\]
        Letting $\varepsilon$ approach $0$, we have shown that $\rhoFS{(h)}(S)\leq\dimFS{(h)}(S)$.
        
        By replacing
        \[\rhoFS{(h)},\ \dimFS{(h)},\ \mathcal{S}^\infty,\ \text{infinitely},\ \liminf\]
         with
        \[\RFS{(h)},\ \DimFS{(h)},\ \mathcal{S}^\infty_{\strong},\ \text{cofinitely},\ \limsup,\]
        the same arguments given above prove that $\RFS{(h)}(S)=\DimFS{(h)}(S)$.    
        \qed
    \end{proof}
    Setting $h=1$ in Theorem~\ref{thm:main} yields $\dimFS{}(S)=\rhoFS{}(S)$, the main theorem of~\cite{FSD}, and the strong dimension version, $\DimFS{}(S)=\RFS{}(S)$, proved in~\cite{AHLM}.
    
    The following corollaries of Theorem~\ref{thm:main} are immediate by Definitions~\ref{def:mfsd} and~\ref{def:compressionratios} and by the hierarchy theorem of~\cite{MFSD}, respectively.
    \begin{corollary}
        For all $S\in\Sigma^\omega$,
        \[\dimFS{\mh}(S)=\rhoFS{\mh}(S)\]
        and
        \[\DimFS{\mh}(S)=\RFS{\mh}(S).\]
    \end{corollary}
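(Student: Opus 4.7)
The plan is to prove the two equalities by establishing the two inequalities $\dimFS{(h)}(S)\leq \rhoFS{(h)}(S)$ and $\rhoFS{(h)}(S)\leq \dimFS{(h)}(S)$ using Constructions~\ref{const:gambler} and~\ref{const:compressor}, respectively, and then to argue that the strong versions follow from minor changes in quantifiers. The two directions are essentially routine given the machinery already set up in Sections~\ref{sec:c2g} and~\ref{sec:g2c}; the content lies in choosing the parameter $k$ large enough to absorb the loss terms, and in translating between the compression-ratio inequality and the $s$-gale inequality.

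For $\dimFS{(h)}(S)\leq\rhoFS{(h)}(S)$, I would fix $\varepsilon>0$ and pick an information-lossless $h$-FSC $C$ whose liminf compression ratio is at most $r+\varepsilon$, where $r=\rhoFS{(h)}(S)$. I then choose $k$ large enough that $((kM+1)|T||Q|)^{\lceil n/k\rceil}\leq 2^{\varepsilon n}$ for all $n\geq k$, which is possible because the base is polynomial in $k$ while the exponent grows like $n/k$. Applying Construction~\ref{const:gambler} and Lemma~\ref{lem:gamblergeneral} produces $G_k$ with $d_{G_k}(S[0:n])\geq 2^{-|C(S[0:n])|-kM-\varepsilon n}|\Sigma|^{n-\ell}$. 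Multiplying by $|\Sigma|^{(s-1)n}$ with $s=r+4\varepsilon$ converts this into a lower bound on the $s$-gale, and for the infinitely many $n$ that witness the liminf bound on $|C(S[0:n])|$, the $(r+4\varepsilon)$-gale of $G_k$ evaluated at $S[0:n]$ diverges, witnessing success; letting $\varepsilon\to 0$ yields the inequality.

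For $\rhoFS{(h)}(S)\leq\dimFS{(h)}(S)$, I would fix $\varepsilon>0$, take $G$ with $S\in\mathcal{S}^\infty[d_G^{(s+\varepsilon)}]$ for $s=\dimFS{(h)}(S)$, apply Lemma~\ref{lem:nv} to obtain a non-vanishing $G'$ with $S\in\mathcal{S}^\infty[d_{G'}^{(s+2\varepsilon)}]$ and (without loss of generality) $c_0=1$, then take $k=\lceil 2/\varepsilon\rceil$ and apply Construction~\ref{const:compressor}. Lemma~\ref{lem:compressorgeneral} then bounds $|C_k(S[0:n])|$ by a sum of the Shannon--Fano--Elias overhead $2n/k\leq\varepsilon n$, the setup-phase cost $(n+\ell)\log|\Sigma|+\ell$ (which gives $(\log|\Sigma|+o(1))n$ asymptotically), and $-\log d_{G'}(S[0:n])=(s+2\varepsilon-1)n\log|\Sigma|-\log d_{G'}^{(s+2\varepsilon)}(S[0:n])$. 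Rearranging and using that infinitely many $n$ satisfy $d_{G'}^{(s+2\varepsilon)}(S[0:n])\geq 1$, the liminf of the compression ratio is at most $s+O(\varepsilon)$, and letting $\varepsilon\to 0$ finishes.

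The strong equalities $\DimFS{(h)}(S)=\RFS{(h)}(S)$ follow by the same two arguments almost verbatim, replacing $\liminf$ with $\limsup$, ``infinitely many $n$'' with ``all sufficiently large $n$,'' and $\mathcal{S}^\infty$ with $\mathcal{S}^\infty_{\strong}$, since each inequality above relates pointwise estimates and the witness $n$ in each direction inherits the corresponding frequency from the hypothesis. The main obstacle I anticipate is purely bookkeeping: choosing the various $\varepsilon$-slack parameters ($s+\varepsilon$, $s+2\varepsilon$, $s+4\varepsilon$, etc.) so that the additive setup-phase costs ($\ell\log|\Sigma|$, $kM$), the multiplicative polynomial-in-$k$ loss in Lemma~\ref{lem:gamblergeneral}, and the Shannon--Fano--Elias ceiling overhead $2n/k$ all become negligible against an $\varepsilon n\log|\Sigma|$ term before passing $\varepsilon\to 0$. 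Everything else is a direct application of the constructions and lemmas already in hand.
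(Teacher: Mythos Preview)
Your proposal is essentially a (correct) reproduction of the proof of Theorem~\ref{thm:main}, not of the corollary you were asked to prove. The statement in question concerns the \emph{multihead} quantities $\dimFS{\mh}(S)=\inf_{h}\dimFS{(h)}(S)$ and $\rhoFS{\mh}(S)=\inf_{h}\rhoFS{(h)}(S)$, whereas your entire argument fixes a single $h$ and establishes $\dimFS{(h)}(S)=\rhoFS{(h)}(S)$ (and its strong analogue). That is precisely Theorem~\ref{thm:main}, which the paper has already proved; you never pass to the infimum over $h$ to reach the corollary's conclusion.

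The paper's own derivation is a one-liner: given Theorem~\ref{thm:main}, the equality $\dimFS{(h)}(S)=\rhoFS{(h)}(S)$ holds for every $h\in\Z^+$, so taking $\inf_{h}$ on both sides (and likewise for the strong versions) immediately gives $\dimFS{\mh}(S)=\rhoFS{\mh}(S)$ and $\DimFS{\mh}(S)=\RFS{\mh}(S)$, by Definitions~\ref{def:compressionratios} and~\ref{def:mfsd}. Your write-up would be fine as a sketch of Theorem~\ref{thm:main}; to make it a proof of the corollary, you should cite Theorem~\ref{thm:main} and add the infimum step, rather than re-deriving the theorem from Constructions~\ref{const:gambler} and~\ref{const:compressor}.
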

    \begin{corollary}
        For each $h\in\Z^+$, there is some sequence $S\in\Sigma^\omega$ such that
        \[\rhoFS{(h)}(S)>\rhoFS{(h+1)}(S)\]
        and
        \[\RFS{(h)}(S)>\RFS{(h+1)}(S).\]
    \end{corollary}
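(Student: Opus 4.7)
The plan is to combine the main theorem just proved with the hierarchy theorem of~\cite{MFSD}, which is cited in the excerpt as providing strict separations at the dimension level. The hierarchy theorem states that for each $h\in\Z^+$ there exists a sequence $S\in\Sigma^\omega$ witnessing $\dimFS{(h)}(S)>\dimFS{(h+1)}(S)$ (and analogously for the strong predimensions $\DimFS{(h)}$, $\DimFS{(h+1)}$). Theorem~\ref{thm:main} tells us that $\dimFS{(h)}$ and $\rhoFS{(h)}$ coincide on every sequence, and likewise for the strong versions.

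Concretely, I would fix $h\in\Z^+$, invoke the hierarchy theorem to obtain a sequence $S\in\Sigma^\omega$ with $\dimFS{(h)}(S)>\dimFS{(h+1)}(S)$, and then apply Theorem~\ref{thm:main} twice (once with $h$ and once with $h+1$) to rewrite this as $\rhoFS{(h)}(S)>\rhoFS{(h+1)}(S)$. If the hierarchy theorem of~\cite{MFSD} supplies a single sequence that simultaneously separates the strong predimensions as well, the same two applications of Theorem~\ref{thm:main} (using the $\DimFS{(h)}(S)=\RFS{(h)}(S)$ half) yield $\RFS{(h)}(S)>\RFS{(h+1)}(S)$ for the same $S$. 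Otherwise, one can obtain a witness $S'$ for the strong separation by the same argument applied to the strong version of the hierarchy theorem, and then either use two separate sequences or, if the corollary is read as requiring a single $S$, form an interleaving or direct-sum-style sequence from $S$ and $S'$ whose $h$- versus $(h+1)$-head predimensions and strong predimensions both drop, again converting to compression ratios via Theorem~\ref{thm:main}.

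Since the result is advertised as immediate, there is no genuine obstacle; the only thing to be careful about is matching the exact form of the hierarchy theorem in~\cite{MFSD} to the two inequalities requested here. If that theorem is already stated with a uniform witness for both $\dimFS$ and $\DimFS$, the proof reduces to a one-line citation of the hierarchy theorem followed by two invocations of Theorem~\ref{thm:main}; if not, the small combinatorial step of merging two witnesses into one (or simply stating the corollary with separate witnesses) is all that is needed.
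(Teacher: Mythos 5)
Your proposal matches the paper's intended argument exactly: the paper treats this corollary as immediate from Theorem~\ref{thm:main} together with the hierarchy theorem of~\cite{MFSD}, which is precisely your fix-$S$-from-the-hierarchy-theorem-and-apply-the-main-theorem-twice plan. Your hedging about whether a single witness handles both the predimension and strong predimension separations is a reasonable caution, but the cited hierarchy theorem supplies such a witness, so the one-line citation route you describe first is exactly what the paper does.
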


\bibliographystyle{plainurl}
\bibliography{mfspc}

\end{document}